\newtheorem{theorem}{Theorem}
\newtheorem{lemma}{Lemma}
\newtheorem{assumption}{Assumption}
\newtheorem{definition}{Definition}
\pgfplotsset{compat=1.7}
\tikzset{mycolor/.style = {dashed,rounded corners,line width=1bp,color=#1}}%
\tikzset{myfillcolor/.style = {draw,fill=#1}}%
\tikzset{
	declare function={
		normcdf(\x,\m,\s)=1/(1 + exp(-0.07056*((\x-\m)/\s)^3 - 1.5976*(\x-\m)/\s));
	}
}
\def\blue{\textcolor{blue}}
\renewcommand{\baselinestretch}{1.6} 
\newcommand{\single}{\renewcommand{\baselinestretch}{1.2}\normalsize}
\newcommand{\double}{\renewcommand{\baselinestretch}{1.63}\normalsize}
\newcommand{\bea}{\begin{eqnarray*}}
\newcommand{\eea}{\end{eqnarray*}}
\newcommand{\be}{\begin{eqnarray}}
\newcommand{\ee}{\end{eqnarray}}
\newcommand{\ed}{\end{document}}
\newcommand{\btab}{\begin{tabular}}
\newcommand{\etab}{\end{tabular}}
\newcommand{\bi}{\begin{itemize}}
\newcommand{\ei}{\end{itemize}}
\newcommand{\bfi}{\begin{figure}}
\newcommand{\efi}{\end{figure}}
\newcommand{\ben}{\begin{enumerate}}
\newcommand{\een}{\end{enumerate}}
\newcommand{\bay}{\begin{array}}
\newcommand{\eay}{\end{array}}
\definecolor{DarkBlue}{rgb}{0,.08,.45}
\definecolor{DarkRed}{rgb}{.7,0,.4}
\def\hg #1 {\texcolor{cyan}{{\it Hans:}   #1}}
\def\H{\mathcal{H}}
\def\S{\mathcal{S}}
\def\bco{\iffalse}
\def\cp{\citep}
\newcommand{\no}{\noindent}
\newcommand{\bc}{\begin{center}}
\newcommand{\ec}{\end{center}}
\newcommand{\bsp}{\begin{split}}
\newcommand{\esp}{\end{split}}
\newcommand{\bdes}{\begin{description}}
\newcommand{\edes}{\end{description}}
\newcommand{\bass}{\begin{assumption}}
\newcommand{\eass}{\end{assumption}}
\newcommand{\bthm}{\begin{theorem}}
\newcommand{\ethm}{\end{theorem}}
\newcommand{\blem}{\begin{lemma}}
\newcommand{\elem}{\end{lemma}}
\def\bco{\iffalse}
\def\cp{\citep}
\DeclareMathOperator*{\argmin}{argmin}
\def\om{\omega}
\def\M{\mathcal{M}}
\def\R{\mathbb{R}}
\begin{document}
\thispagestyle{empty} \single \bc {\bf \sc \Large Geodesic Optimal Transport Regression}
\vspace{0.15in}\\
Changbo Zhu \\
Department of ACMS, University of Notre Dame,
Notre Dame, IN 46556 USA \vspace{0.1in} \\
Hans-Georg M\"uller \\
Department of Statistics, University of California, Davis,
Davis, CA 95616 USA \ec \centerline{16 December 2023}

\vspace{0.1in} \thispagestyle{empty}
\bc{\bf \sf ABSTRACT} \ec \vspace{-.1in} \no 
\setstretch{1} 
Classical regression models do not cover non-Euclidean data that reside in a general metric space, while   the current literature on non-Euclidean regression by and large has focused on scenarios where either predictors or responses are random objects, i.e., non-Euclidean, but not both. 
In this paper we propose geodesic optimal transport regression  models for the case where both   predictors and responses lie in a common  geodesic metric space and predictors may include not only one but also several 
random objects. This provides  an extension of classical multiple regression to the case where  both predictors and responses reside in non-Euclidean metric spaces, a scenario that has not been considered before.   It is based on the concept of optimal geodesic transports, which we define as an extension of the notion of optimal transports in distribution spaces to more general geodesic metric spaces, where we
characterize optimal transports as  transports along geodesics.  The proposed regression models cover the relation between non-Euclidean responses and vectors of non-Euclidean predictors in many spaces of practical statistical interest. 
These include one-dimensional distributions  viewed  as elements of the 2-Wasserstein space and  multidimensional distributions with the Fisher-Rao metric that are represented as data on the Hilbert sphere. Also included are data on finite-dimensional Riemannian manifolds, with an emphasis on spheres, covering  directional and  compositional data, as well as    data  that consist of symmetric positive definite matrices.  We illustrate the utility of geodesic optimal transport regression  with 
data on summer temperature distributions and human mortality.\\

\no {KEY WORDS:\quad Geodesic Metric Spaces; Metric Statistics; Multiple Regression; Random Objects; Ubiquity; Spherical Data; Distributional Data; Symmetric Positive Definite Matrices. 
\thispagestyle{empty} \vfill
\noindent \vspace{-.2cm}\rule{\textwidth}{0.5pt}\\
{\small Research supported in part by  NSF DMS-2310450.}

\newpage
\pagenumbering{arabic} \setcounter{page}{1} \double


\section{INTRODUCTION}

In statistical analysis and modeling one increasingly encounters data that are neither scalars nor vectors or functions. Such complex data can often be considered to be elements of a  non-Euclidean space. Examples  include age-at-death distributions or distributional time series in Wasserstein space \citep{mull:21:9,zhu2021autoregressive},  directional data on the sphere \citep{down, ZHU2023},  diffusion tensor imaging data  in the space of symmetric positive definite matrices \citep{lin2019extrinsic, mull:22:11}, or Corpus Callosum shape contours situated on a  smooth manifold \citep{lin2017extrinsic}. Classical statistical methods are not well suited to deal with such data, as basic arithmetic operations like addition and subtraction are not well-defined due to the absence of a  linear structure. As an example, consider distributional data in the 2-Wasserstein space. Obviously, addition and subtraction are not well defined, because all representations of distributions  come with constraints, e.g., a density function is non-negative and integrates to 1, quantile functions are non-decreasing and constrained at their left and right boundaries, etc., and thus the addition or subtraction of two distributions does not lead to   a valid distribution, irrespective of which representation one adopts. 

To address these challenges, there has been growing interest in extending regression models to accommodate situations where the response or predictor or both are lying in a non-Euclidean space.  Regressing non-Euclidean responses on Euclidean predictors or vice versa has been considered by various researchers, including  \cite{thomas2013geodesic, lin2017extrinsic, lin2019extrinsic, panaretos2022}, among others. A 
recent approach for the case where Euclidean predictors are paired with random objects, i.e., metric-space valued random variables as responses is Fr\'echet regression  \citep{petersen2019frechet}, with recent dimension reduction approaches
\cp{dong:22,virta:22,weng:23,zhan:23} and single-index models  \cp{mull:23:5,ghos:23}. 
However, when both  predictors and response are situated  in a non-Euclidean space, the regression task becomes more complex. Specialized  methods have been devised  by  utilizing the  geometric properties of specific metric  spaces.  For instance, regression models where  both the predictor and response lie on the sphere have been proposed  \citep{down, Michael, ZHU2023}, as well as  distribution-on-distribution regression models that operate in Wasserstein space \citep{ mull:21:9, zhu2021autoregressive,  zhan:22, jiang2022wasserstein, panaretos2022, ghod:23}.

In this paper, we introduce a novel geodesic optimal  transport (GOT) regression model  to accommodate regression tasks where both  predictor and response lie in a common  geodesic metric space, including situations 
where a response is paired with multiple predictors; this scenario is of practical relevance but has not been well studied before, as so far the notion of optimal transports in statistical models was confined to transports of distributions in Wasserstein space.  
The proposed GOT model is based on a   notion of optimal transport in  geodesic spaces that generalizes the classical notion of optimal transport in spaces of probability measures with the Wasserstein metric \cp{mong:81, vill:03, kantorovich2006translocation} and that we introduce in this paper. 

The well-established Wasserstein optimal transports are easily seen to move along geodesics in the metric space of probability measures equipped with the 
Wasserstein metric \cp{mcca:97}. This  provides the motivation for extending  the notion of optimal transport in distribution sopaces to geodesic optimal transport in general geodesic metric spaces. We  say that moving  a random object located in a geodesic space to another random object along the geodesic connecting the two random objects is a geodesic optimal transport.  Geodesic optimal transports are thus  transports along shortest paths, which are well defined as long as the metric space features unique geodesics.
These transports are optimal in the sense of minimizing transport cost under the plausible assumption that  cost is proportional to the length of the transport path.

Fortuitously,  there exists an algebraic structure that can be  used to quantify  movements of objects along geodesics, which was initially introduced for Wasserstein optimal transports in the space of probability distributions \cp{zhu2021autoregressive}
and that we extend here to the more general case of geodesic optimal transports. This transport algebra is a critical ingredient of the proposed GOT regression. 
For all specific geodesic spaces for which we demonstrate the GOT approach we provide explicit constructions for the geodesics and geodesic  optimal transports
that form the basis of GOT regression. 

As mentioned,  GOT regression accommodates multiple predictors. This specifically pertains to the case where the predictors possess no natural known ordering in the same vein as for the classical Euclidean multiple regression, which  provides the inspiration for the proposed GOT regression.   
Since transport operations in general are not commutative, an underlying order needs to be chosen to identify the model. We provide a data-adaptive selector in our methodology so that the proposed approach does not require to pre-specify and order. This feature distinguishes  the GOT model from existing autoregressive distributional models for time series, such as the 
autoregressive optimal transport model (ATM)   for distributional time series in the Wasserstein space \citep{zhu2021autoregressive}, where  the predictors are naturally ordered in  time. We derive prediction consistency 
results for the GOT model under additional geometric constraints on the geodesic space; these are satisfied for the most pertinent spaces of statistical interest.  

The proposed GOT  regression models are illustrated with  human mortality data and temperature data. We compare the performance of GOT regression for these  data with Nadaraya-Watson regression 
\citep{stei:09}, which is also applicable for situations where  both responses and predictors lie in a general geodesic metric space, albeit is subject to the curse of dimensionality which is already severe in the special case of (Euclidean) $L^2$ function spaces due to the well-known small ball probability problem, e.g., for Gaussian processes   \citep{kuel:93};  
 it is potentially a much bigger challenge for data in nonlinear metric spaces; this provides additional motivation for using the transport algebra and the  GOT approach for regression. 

The rest of the paper is organized as follows: In Section \ref{sec:geotrans} we introduce the key concepts of geodesic optimal transport maps and transport algebra and demonstrate in Section \ref{sec:exp}  the construction of transport maps for some specific examples, the Hilbert sphere, 2-Wasserstein space and the space of symmetric positive definite matrices. GOT regression models, estimatres and  supporting theory are developed in Section \ref{sec:gtr}.  Illustrations  for human mortality and summer temperature data are in Section \ref{sec:app}, followed by  Conclusions   in Section \ref{sec:conl}




\section{GEODESIC OPTIMAL TRANSPORTS}\label{sec:geotrans}

Our starting point is a metric space $(\mathcal{M},d)$  with metric $d$ where  the data reside.  
We assume throughout that  $(\mathcal{M},d)$ is bounded and separable. 
 A 
 curve in $\mathcal{M}$ is a  map $r : [a,b] \rightarrow \mathcal{M}$ with length
\begin{align*}
    L(r)  = \sup \left\lbrace \sum_{k=1}^K d( r(t_{k-1}),  r(t_k))  \right\rbrace,
\end{align*}
where the supremum is taken over all $K \in \mathbb{N}$ and partitions $a=t_0 \leq t_1 \leq \dots  \leq t_K =b$ and the curve is  rectificable if $L(r) < \infty$. For two arbitrary points $\omega_1, \omega_2 \in \mathcal{M}$, a connecting path is a curve $r_{\om_1, \om_2}:[a,b] \rightarrow \mathcal{M}$ such that $r_{\om_1, \om_2}(a)=\omega_1$ and $r_{\om_1, \om_2}(b) = \omega_2$. The intrinsic metric $d_I(\omega_1, \omega_2)$ is defined as the infimum of the lengths of all rectifiable paths between any $\omega_1, \omega_2 \in \M$. If $d_I = d$, $(\mathcal{M}, d)$ is called a length space.  A geodesic is a locally length-minimizing path. If for any two points $\om_1, \om_2 \in \mathcal{M}$ there always exists a geodesic $\gamma_{\om_1, \om_2}$ that connects them,  $(\mathcal{M}, d)$ is a geodesic metric space, which is also a length space with $d=d_I$, and if there exists only one such geodesic for all $\omega_1, \omega_2 \in \M$ 
it is a unique geodesic space. We assume throughout  that the random objects are situated in a unique geodesic metric space and 
work with the metric $d=d_I$. The following assumption introduces what we refer to as a  ubiquity property of  the geodesics in  the space $\mathcal{M}$ that is essential for the proposed geodesic transport model. 

\begin{assumption} \label{ass:main} {\it (Ubiquity of geodesics)}. $(\mathcal{M}, d)$ is a unique 
geodesic metric space and there exists a map $\Upsilon: \,  \M \times \M \times \M \rightarrow \M$ such that 
for any $\om_1,\om_2, \om_3 \in \M$, where $\om_1 \ne \om_2$,  there exists a unique element $\om_4\in\M$ such that for any $r \in [0,1]$, 
\begin{align}
\label{eq:main}
    \Upsilon(\om_1, \gamma_{\om_1, \om_2}(r), \om_3) = \gamma_{\om_3, \om_4}(r).
\end{align}
\end{assumption}

Intuitively,  this ubiquity assumption means that any given  geodesic  $\gamma_{\om_1, \om_2}$ can be  
attached  at any element $\om_3 \in \M$, where the new geodesic has the endpoint $\om_4 \in \M$. In other words,  any geodesic 
 $\gamma_{\om_1, \om_2}$ for any $\om_3 \in \M$ can be mapped to a unique geodesic $\gamma_{\om_3,\om_4}$ and we refer to this property as ubiquity of geodesics $\gamma_{\om_1,\om_2}$. 
 The ubiquity map $\Upsilon$  obviously is well-defined in the Euclidean space $\mathbb{R}^p$ where we simply set $\om_4=\om_3 + \om_2 -\om_1$, $ \Upsilon(\om_1, \gamma_{\om_1, \om_2}(r),\om_3)= \om_3 + r(\om_2 - \om_1)$ for any $\om_1, \om_2, \om_3 \in \mathcal{M}$. The same ubiquity construction  works  for Hilbert spaces $L^2([0,1]$ and in fact for any vector space, where geodesics are just vectors that connect the starting and end point. An analogous principle 
 can be applied for Riemannian manifolds through parallel transport, which has been instrumental for various statistical applications \cp{yuan:12,lin:19:2,mull:21:9}.
 
   Nonlinear metric spaces that satisfy this assumption include the space of symmetric positive definite matrices with the Frobenius or power metric, the space of distributions with the 2-Wasserstein metric or the Fisher-Rao metric, compositional data with the square root metric and the space of networks with a fixed number of knots,  equipped with the   Frobenius metric for graph Laplacians \cp{mull:22:11}.   For example for compositional data with the square root metric or distributions with the Fisher-Rao metric, the corresponding spaces  can be viewed as the positive segment of                                                                                                                                                                                                                                                                                                                                                                                                                                                                                                                           a finite-dimensional sphere, respectively,  infinite-dimensional Hilbert sphere with the respective Riemannian geodesics \cp{ZHU2023}; see Section 2.1-2.3 for more details. 
   
   We proceed to introduce the notion of  \emph{geodesic transport}, where we 
   view geodesics $\gamma_{\om_1, \om_2}$ as transport paths that transport objects  $\om_1$ to $\om_2$. More precisely,  
  
   \begin{definition}
Given arbitrary  $\om_1, \om_2 \in \mathcal{M}$, we define the (geodesic) transport $T_{\omega_1, \omega_2}:\mathcal{M} \rightarrow \mathcal{M}$, determined by $\om_1$ and $\om_2$, as
\begin{align*}
T_{\om_1, \om_2}(\om) = \Upsilon (\om_1, \om_2, \om)
\end{align*}
with inverse $T^{-1}_{\om_1, \om_2}: \mathcal{M} \rightarrow \mathcal{M}$ defined as 
$T^{-1}_{\om_1, \om_2}(\om) = \Upsilon (\om_2, \om_1, \om). $
\end{definition}

The idea is that any given geodesic $\gamma_{\om_1,\om_2}$ defines a geodesic transport that can be applied at any $\om_3 \in \M$, transporting $\om_3$  to a uniquely determined element $\om_4 \in \M.$ 
Consider the special case of univariate distributions  with the 2-Wasserstein metric, 
$$
d_{\mathcal W}(\mu_1, \mu_2) = \left\lbrace \int_{0}^1\left( F_{1}^{-1}(x) - F_{2}^{-1}(x) \right)^2 dx \right\rbrace^{1/2},
$$  where $\mu_1, \mu_2$ are two given probability measures with cumulative distributions functions  $F_1, F_2$ that have quantile functions  $F_1^{-1}, F_2^{-1}$, obtained as suitably defined inverses from $F_1, F_2$. 
The geodesic transport of $\om_1$ to $\om_2$  corresponds to the optimal transport of mass distributed according to $\om_1$ to the mass distributed according to $\om_2$ in the sense of Kantorovich \citep{kantorovich2006translocation} and the geodesic that connects $\om_1$ and $\om_2$ is the McCann interpolant \cp{mcca:97}. 

Importantly, for spaces other than Wasserstein space of distributions,  be it the space of distributions in any dimension with the Fisher-Rao metric \citep{dai2022statistical} or a space where the elements are not distributions at all, we can still define  geodesic transports, as long as the ubiquity of geodesics as per Assumption 1 is satisfied. In general geodesic spaces there is no direct interpretation in terms of mass transport, rather the movement of mass is observed as the entire object moves along a geodesic to the target object and this transport is optimal as it proceeds along the shortest path.   For the special case of Riemannian manifolds with the important special case of finite-dimensional spheres or infinite-dimensional Hilbert spheres, to ensure uniqueness of geodesics one needs to remove a part of the manifold  and this  creates a boundary. Geodesic transports  are then constrained to operate within the bounded subset of the Riemannian manifold or sphere, and then need to be appropriately modified so as to never cross the boundary. 

For random objects $X$ taking values in $\mathcal{M}$, and a given probability distribution $\mathcal{P}$ on $\M$,  the Fr\'{e}chet mean $\mu$ is defined as the set 
\begin{align*}
   \mu = \argmin_{\omega \in \mathcal{M}} E[d^2 (\omega,X)],
\end{align*}
where the expectation is with respect to $\mathcal{P}$. We assume in the following that Fr\'echet means are well defined, i.e., exist and are unique, which may depend on the combination of the  distribution $\mathcal{P}$  and  the geometry of $\mathcal{M}$. Uniqueness of the Fr\'echet mean is always guaranteed  for Hadamard spaces \citep{kloeckner2010geometric}, such as the Euclidean space $\mathbb{R}^p$, the Hilbert space $L^2([0,1])$, space of distributions with the 2-Wasserstein metric or the space of symmetric positive definite matrices with the Frobenius metric, to name some prominent examples. But existence and uniqueness are not limited to Hadamard spaces and may also be satisfied in spaces with
positive curvature, depending on the  measure $\mathcal{P}$.

Under Assumption 1, let $T_{\om_1, \om_2}$ be a geodesic transport map corresponding to $\om_1, \om_2 \in \mathcal{M}$ and denote the set of all geodesic transport maps by
\begin{align*}
    \mathcal{T} = \left\lbrace T_{\om_{1}, \om_2} : \om_1, \om_2 \in \mathcal{M} \right\rbrace.
\end{align*}
Following \cite{zhu2021autoregressive}, we  define a scalar multiplication with a factor  $0 < \alpha < 1$ on $T_{\om_1, \om_2}$ by  $\alpha \odot T_{\om_1, \om_2}:\mathcal{M} \rightarrow \mathcal{M}$ such that for any $\om \in \mathcal{M}$
\begin{align}
\label{eq:mult}
    [\alpha \odot T_{\om_1, \om_2}](\om) := \Upsilon (\om_1, \gamma_{\om_1, \om_2}(\alpha), \om), 
\end{align}
For $-1 < \alpha <0$, the scalar multiplication is defined as
\begin{align*}
    \alpha \odot T_{\om_1, \om_2} := |\alpha| \odot T_{\om_1, \om_2}^{-1}.
\end{align*}
For any  $|\alpha| > 1$, let $b = \lfloor |\alpha| \rfloor$, the integer part of $\alpha$, and set $a = |\alpha| - b$.  We define a scalar multiplication by
\begin{align}  \label{mult} 
\alpha \odot T_{\om_1, \om_2} (x) := \left\lbrace  \begin{array}{cl}
(a \odot T_{\om_1, \om_2}) \circ \underbrace{ T_{\om_1, \om_2} \circ T_{\om_1, \om_2} \circ \dots \circ T_{\om_1, \om_2}}_{b \text{ compositions of } T_{\om_1, \om_2}} (x) , &  \alpha > 1  \\
(a \odot T_{\om_1, \om_2}^{-1}) \circ \underbrace{T_{\om_1, \om_2}^{-1} \circ T_{\om_1, \om_2}^{-1} \circ \dots \circ T_{\om_1, \om_2}^{-1} }_{b \text{ compositions of } T_{\om_1, \om_2}^{-1} } (x) , &  \alpha < -1
\end{array} \right. .
\end{align}
Finally, the addition $\oplus$ of two arbitrary geodesic transports $T_1, T_2 \in \mathcal{T}$ is defined as 
$$
T_1 \oplus T_2 := T_1 \circ T_2.
$$

\section{EXAMPLES OF GEODESIC TRANSPORTS} \label{sec:exp}

\subsection{Wasserstein Space}
We provide here examples of geodesic spaces for which the ubiquity assumption (Assumption  \ref{ass:main})  can be satisfied and include  explicit forms for the corresponding geodesic transports. The first example is the Wasserstein space for one-dimensional distributions.
\no Let $\mathcal{W}$ be the set of univariate probability measures on a bounded interval $\mathcal{S} = [s_1, s_2]$ such that $\int_{\mathcal{S}} x^2 d \mu(x) < \infty$ for any $\mu \in \mathcal{W}$. For any measurable function $T:\mathcal{S} \rightarrow \mathcal{S}$, the pushforward measure of $\mu$ by $T$ is denoted as $T_{\#}\mu$. We equip $\mathcal{W}$ with the 2-Wasserstein metric \citep{vill:03}
\begin{align*}
    d_{\mathcal{W}}(\mu_1, \mu_2)  = \inf_{T:T_{\#}\mu_1 = \mu_2} \left\{ \int_{\mathcal{S}} ( T(x) - x )^2 d\mu_1(x) \right\}^{1/2}.
\end{align*} 
The minimizer to the above problem is attained at the optimal transport map 
\begin{align*}
    T_{12} := F_{2}^{-1} \circ F_1,
\end{align*}
where $ F_1=F(\mu_1)$ and  $F_2=F(\mu_2) $ are the cdfs of $ \mu_1 ,\mu_2 $ respectively, and $F_1^{-1}$, $F_{2}^{-2}$ are quantile functions defined by the generalized inverse. Then $(\mathcal{W}, d_{\mathcal{W}})$ is 
a geodesic metric space. For any two distributions $\mu_1, \mu_2 \in \mathcal{W}$, the geodesic between them is given  by McCann's interpolant \citep{mcca:97}, 
\begin{align*}
\gamma_{\mu_1, \mu_2}(a) =\left( id + a ( T_{12} - id ) \right)_{\#} \mu_1, \quad a \in [0,1],
\end{align*}
where $id$ is the identity map. Given $\mu_1, \mu_2 \in \mathcal{M}$ and a probability measure $\mu_3 \in \mathcal{M}$,  Assumption \ref{ass:main} is satisfied with $ \mu_4 = (F_2^{-1} \circ F_1)_{\#} \mu_3 $ and $ \Upsilon (\nu_1, \nu_2, \nu_3) = (G_2^{-1} \circ G_1)_{\#} \nu_3$, where $ \nu_1, \nu_2, \nu_3 \in \mathcal{W}$ are arbitrary and $ G_1, G_2$ are the cdfs of $\nu_1, \nu_2$ respectively. To see this, notice that the corresponding quantile functions of $\mu_4, \gamma_{\mu_1, \mu_2}(a)$ are $F_2^{-1} \circ F_1 \circ F_{3}^{-1}$, $ F_1^{-1} + a(F_2^{-1} -F_1^{-1}) $ respectively, and so
\begin{align*}
    \Upsilon (\mu_1, \gamma_{\mu_1, \mu_2}(a), \mu_3) & = \left( \left\lbrace F_1^{-1} + a(F_2^{-1} -F_1^{-1}) \right\rbrace  \circ F_1 \right)_{\#} \mu_3 \\
    & = \left(  id + a(F_2^{-1} \circ F_1  - id)  \right)_{\#} \mu_3 \\
    & = \gamma_{\mu_3, \mu_4}(a).
\end{align*}


\subsection{Hilbert Sphere}
For  a separable Hilbert space $\mathcal{H}$ with inner product $ \langle \cdot , \cdot \rangle_{\mathcal{H}} $ and norm $\| g \|_{\mathcal{H}} := \sqrt{  \langle g , g \rangle_{\mathcal{H}}}$, for $ g \in \H$ the corresponding Hilbert sphere is  $\mathcal{S} = \{ g \in \mathcal{H} :  \| g \|_{\mathcal{H}} =1  \}$. We consider here the infinite-dimensional Hilbert sphere as well as finite-dimensional versions and use the notation $\S$ to denote both.  In the finite-dimensional case,  if the ambient space is $\R^p$, the sphere is usually denoted by $\S^{p-1}.$ Equipping $\mathcal{S}$ with the intrinsic metric $d(g,h) := \text{arccosin}(\langle g, h \rangle_{\mathcal{H}})$ makes  $(\mathcal{S}, d)$  a geodesic metric space, however the geodesics are not unique. Uniqueness of geodesics can be achieved if we require additional constraints on  $g \in \S$, essentially removing a part of the sphere. 

A statistically meaningful constraint is  $g \ge 0$, which  naturally arises when the elements $g$ of the Hilbert sphere correspond to  square roots of density functions, $g=\sqrt{f}$ in the infinite-dimensional case. Specifically, for the case of (multivariate) absolutely continuous distributions with densities $f$, the Fisher-Rao metric pertains to the geodesic distances between square-root densities and for densities $f_1, f_2$ is defined as 
\begin{align}
\label{fr}
d_{\text{FR}}(f_1, f_2)= \text{arccosin}(\langle\sqrt{f_1}, \sqrt{f_2}\rangle).
\end{align} 
Analogously, compositional data, i.e., vectors with non-negative elements that sum to 1 and appear in data that correspond to percentages and proportions,   can be viewed as elements of a 
finite-dimensional Hilbert sphere when taking their square roots \cp{scea:11} and the metric is analogous to the Fisher-Rao metric. 
  

We  show that Assumption \ref{ass:main} can be enforced
 by explicitly constructing the map $\Upsilon$. For any $g_1,g_2 \in \S$,  
the first step is to  define a rotation operator $\mathbf{R}_{g_1, g_2}(\vartheta)$ that rotates the sphere counterclockwise within $\emph{\text{span}}\{ g_1, g_2 \}$ by an angle $\vartheta$, 
\begin{align} \label{eq:rotation}
\mathbf{R}_{g_1, g_2}(\vartheta) := \exp( \vartheta Q_{g_1, g_2}) = I + \text{sin}(\vartheta) Q_{g_1, g_2} + (1 - \text{cos}(\vartheta))Q_{g_1, g_2}^2,
\end{align}
where $I$ is the identity operator and with $ u_1 = g_1 \text{ and } u_2 = (g_2 - \langle g_2, u_1 \rangle u_1) / \| g_2 - \langle g_2, u_1 \rangle u_1 \|_{\mathcal{H}}$ as the orthonormalized version of $g_1, g_2$, we define 
$Q_{g_1, g_2}:= u_1 \otimes u_2 - u_2 \otimes u_1$.

 Note that the geodesic $r_{g_1, g_2}:[0,1] \rightarrow \mathcal{M}$ between $ g_1 $ and $g_2$ can be traced using $\mathbf{R}_{g_1, g_2}(\vartheta)$ as $ r_{g_1, g_2}(a) = [\mathbf{R}_{g_1, g_2}( a \theta)](g_1) $, where $\theta = \text{arccosin}(\langle g_1, g_2 \rangle_{\mathcal{H}})$. Then the map $\Upsilon$ that satisfies Assumption \ref{ass:main} can be defined as
\begin{align*}
    \Upsilon(g_1, r_{g_1, g_2}(a), g_3) :=  [\mathbf{R}_{g_3, g_4}(a \widetilde{\theta})](g_3),
\end{align*}
where $\widetilde{\theta} = \text{arccosin}(\langle g_3, g_4 \rangle_{\mathcal{H}}) $ and $g_4 := [\mathbf{R}_{g_1, g_2}(\theta)](g_3)$ for $ \theta = \text{arccosin}(\langle g_1, g_2 \rangle_{\mathcal{H}}) $. To satisfy additional constraints and construct a map $\Upsilon:\mathcal{D} \times \mathcal{D} \times \mathcal{D} \rightarrow \mathcal{D}$ that operates in some convex subset $\mathcal{D} \subset \mathcal{S}$, we utilize a user-specified projection operator $\text{Proj}_{\mathcal{D}} : \mathcal{M} \rightarrow \mathcal{D}$ and define $g_4 := \text{Proj}_{\mathcal{D}}( [\mathbf{R}_{g_1, g_2}(\theta)](g_3))$; this modified map $\Upsilon$  satisfies Assumption \ref{ass:main}.

Pertinent applications include the analysis of directional data, which can be represented as a three-dimensional vector $(x,y,z) \in \mathbb{R}^3$ lying on a sphere, i.e., satisfying  $x^2+y^2+z^2 = 1$. We note  that the set of rotation matrices is also known as  the special orthogonal group (SO(3)) and is a group with the group operation defined as $\mathbf{R}_1 \cdot \mathbf{R}_2 = \mathbf{R}_1 \mathbf{R}_2$.

\subsection{Space of Symmetric Positive Definite Matrices}


We denote by $\mathcal{S}_m^{+}$ be the set of $m \times m $ symmetric positive-definite (SPD)  matrices. For any $ S \in  \mathcal{S}_m^{+}$, there exists a unique lower triangular matrix $L$ with positive diagonal entries such that $S = L L^T$ and  a diffeomorphism $\mathscr{L}$ between $\mathcal{S}_m^{+}$ and $\mathcal{L}_{+}$, the set of lower triangular matrices with positive diagonal, where for a given SPD matrix the 
unique lower triangular matrix $\mathscr{L}(S)$ is the Cholesky factor of $S$. For any $L \in \mathcal{L}_{+}$, let $\lfloor L \rfloor$ be the non-diagonal part and  $\mathscr{D}(L)$ the diagonal part,  respectively. By equipping  $\mathcal{L}_{+}$ with the  metric
\begin{align*}
    d_{\mathcal{L}_{+}} (L_1, L_2) := \left\lbrace \|\lfloor L_1 \rfloor - \lfloor L_2 \rfloor \|_{F}^2  +  \| \mathscr{D}(L_1) - \mathscr{D}(L_2) \|_F^2 \right\rbrace^{1/2},
\end{align*}
where $ \| \cdot \|_F $ is the Frobenius norm, $(\mathcal{L}_{+}, d_{\mathcal{L}_{+}})$ becomes a geodesic metric space. Following \cite{lin2019riemannian}, the geodesic w.r.t $d_{\mathcal{L}_{+}}$ between $L_1$ and $L_2$ is   
\begin{align*}
    \gamma_{L_1, L_2}(a) = \lfloor L_1 \rfloor + a( \lfloor L_2 \rfloor - \lfloor L_1 \rfloor ) + \text{exp} \left\lbrace \text{log}(\mathscr{D}(L_1)) + a(\text{log}(\mathscr{D}(L_2)) - \text{log}(\mathscr{D}(L_1))) \right\rbrace.
\end{align*}
Given any  $L_1, L_2, L_3 \in \mathcal{L}_{+}$, it is easy to show that Assumption \ref{ass:main} holds with $L_4$ such that 
\begin{align*}
   & \lfloor L_4 \rfloor = \lfloor L_3 \rfloor + ( \lfloor L_2 \rfloor - \lfloor L_1 \rfloor ), \\
   & \mathscr{D}(L_4) = \text{exp} \left\lbrace \text{log}(\mathscr{D}(L_3)) + (\text{log}(\mathscr{D}(L_2)) - \text{log}(\mathscr{D}(L_1))) \right\rbrace
\end{align*}
and $\Upsilon : \mathcal{L}_{+} \times \mathcal{L}_{+} \times \mathcal{L}_{+} \rightarrow \mathcal{L}_{+}$ defined as $\Upsilon(M_1, M_2, M_3) = M_4$, where 
$ \lfloor M_4 \rfloor = \lfloor M_3 \rfloor + ( \lfloor M_2 \rfloor - \lfloor M_1 \rfloor ) $ and $ \mathscr{D}(M_4) = \text{exp} \left\lbrace \text{log}(\mathscr{D}(M_3)) + (\text{log}(\mathscr{D}(M_2)) - \text{log}(\mathscr{D}(M_1))) \right\rbrace $ for arbitrary $M_1, M_2, M_3 \in \mathcal{L}_{+}$. 

We note that geodesic optimal transports that satisfy the ubiquity assumption can also be constructed for various other metrics in SPD space, as well as for the space of graph Laplacians which correspond to networks when viewed as random objects \cp{mull:22:11}.

\section{GEODESIC OPTIMAL TRANSPORT REGRESSION}\label{sec:gtr}
Suppose we observe a sample of $n$ paired points $\{(\bm{X}_{i}, Y_i): i=1,2, \cdots, n \} \sim^{i.i.d} (\bm{X}, Y)$, where  $ \bm{X} = (X_1, X_2, \dots, X_p)  \in \mathcal{M}^p$ and  $Y \in \mathcal{M}$ where $(\bm{X}, Y) $ have a joint distribution in terms of a  probability measure  on the space $\M^{p+1}.$ Our goal is to develop a regression model that allows to regress $Y_i$ on the $p$ metric-space valued predictors $\bm{X}_i = (X_{i, 1}, X_{i, 2} \cdots, X_{i, p})$ in a principled way. 

 Motivated by the multiple linear regression model in Euclidean spaces (see below for more details on this), we  introduce the following geodesic optimal transport (GOT) regression  model, 
\begin{align}
\begin{split}
    Y_i & = \varepsilon_i \oplus \alpha_1 \odot T_{i, j_1^{*}} \oplus \alpha_{ 2 } \odot T_{i, j^{*}_{2} } \oplus \cdots \oplus  \alpha_p \odot T_{i, j^{*}_p }  (\nu), \label{got}
\end{split}
\end{align}
where $\bm{\alpha} = (\alpha_1, \dots, \alpha_p)^T$ are the true model parameters; $\bm{\mu} = (\mu_1, \dots, \mu_p) \in \mathcal{M}^p$, where  $\mu_j$ is the Fr\'{e}chet mean of $X_j$; $\nu$ is the Fr\'{e}chet mean of $Y$;  $ T_{i,j}:= T_{\mu_{j}, X_{i, j} } \in \mathcal{T} $ is the geodesic transport map from $\mu_{j}$ to $X_{i, j}$; $j_1^{*}, j_2^{*}, \dots, j_p^{*}$ is the true ordering of indices $1, 2, \dots, p$ of the $p$ predictors, where we assume that such a true ordering of the predictors exists, but is unknown;  lastly, $\{\varepsilon_i:\mathcal{M} \rightarrow \mathcal{M} \}_{i=1}^n$ are random perturbation maps \cp{mull:22:8}, i.e., i.i.d random geodesic transports that take values in $\mathcal{T}$ such that for any fixed $z \in \mathcal{M}$,
\begin{align}
\label{eq:epsilon}
    z = \argmin_{ \om \in \mathcal{M}} E\left[  d^2 (\om, \varepsilon_i(z))  \right].
\end{align}
These random perturbation maps are the equivalent in metric spaces of i.i.d. zero mean additive errors in Euclidean spaces.  

We use the notations  $\bigoplus_{k=1}^p \alpha_{k} \odot T_{i, j_{k}^{*}}   := \alpha_1 \odot T_{i, j_1^{*}} \oplus \alpha_{ 2 } \odot T_{i, j^{*}_{2} } \oplus \cdots \oplus  \alpha_p \odot T_{i, j^{*}_p }$ and $T_j = T_{\mu_j, X_j}$ and refer to  $E[d^2 (Y, Z)]$ as the prediction error of a random object $Z \in \M$, when $Z$ is viewed as a predictor of the response $Y$.   The following assumption is needed to make it possible to identify from the observed data the assumed-to-exist latent ordering of the predictors $j_1^{*}, \dots, j_p^{*}$, which is a permutation of  $1,2, \dots.$
\begin{assumption}\label{ass:order}
    The ordered predictors   $X_{j_1^{*}}, X_{j_2^{*}}, \dots, X_{j_p^{*}}$ have the following property.  
    \begin{enumerate}
        \item $X_{j_1^{*}}$ minimizes the prediction error for $Y$ among all single predictors $X_1, \dots, X_p$ in the sense that  
        \begin{align*}
             \min_{\alpha_1} E[d^2 (Y, \alpha_1 \odot T_{ j_1^{*} }) ] & < \min_{j \neq j_1^{*}} \min_{\alpha_1} E[d^2 (Y, \alpha_1 \odot T_{j})].
        \end{align*}
        \item $X_{j_2^{*}}$ minimizes the remaining prediction error for $Y$ among all remaining single predictors after $X_{j_1^{*}}$ has been selected in model \eqref{got} as the first predictor,
        \begin{align*}
        \min_{\alpha_1, \alpha_2} E[d^2(Y, \alpha_1 \odot T_{j_1^{*}} \oplus \alpha_2  \odot T_{j_2^{*}}  ) ] & < \min_{j \neq j_1^{*}, j_2^{*}} \min_{\alpha_1, \alpha_2} E[d^2(Y, \alpha_1 \odot T_{j_1^{*}} \oplus \alpha_2 \odot T_{j})],  
        \end{align*}
        \item In general, for $m<p$, $X_{j_m^{*}}$, minimizes the remaining prediction error after $X_{j_1^{*}}, \dots, X_{j_{m-1}^{*}}$ have been selected as the first $m-1$ predictors in model \eqref{got}, i.e.,  
        \begin{multline*}
    \min_{\alpha_1, \dots, \alpha_{m}} E\left[d^2 \left(Y, \bigoplus_{k=1}^{m} \alpha_{k} \odot T_{ j_{k}^{*}} (\nu) \right) \right]  < \\
\min_{j \neq j_1^{*}, \dots,  j_{m}^{*}} \min_{\alpha_1, \dots,  \alpha_{m}}  E\left[d^2 \left(Y, \bigoplus_{k=1}^{m-1} \alpha_{k} \odot T_{ j_{k}^{*}} \oplus \alpha_{m} \odot T_{j} \right)\right].
    \end{multline*}
    \end{enumerate}
\end{assumption}

 {\it Motivation of the Geodesic Optimal Transport Model \eqref{got}.} As already mentioned, in the Euclidean case  $\mathcal{M} = \mathbb{R}^d$, geodesic optimal transport maps $T$ are simply  differences, $T_{X_1, X_2} = X_2 - X_1$ for any $X_1, X_2 \in \mathbb{R}^d$, while the  geodesic optimal transport regression model reduces to
\begin{align}
    Y_i = \epsilon_i + \sum_{k=1}^p  \alpha_j (X_{i, j_{k}^{*}} - E[X_{i,j_k^{*}}] )  + E[Y_i]. \label{eucl} 
   \end{align}
and thus to the classical multiple linear regression model in $\R^d$ with $p$ predictors, where $d=1$ is the most common case that features scalar responses. Thus the proposed GOT regression emerges as an extension of classical linear regression when data are situated in geodesic spaces. 
Since the addition operation in Euclidean spaces including Hilbert spaces is commutative, the ordering of the predictors is irrelevant in the Euclidean linear case and Assumption 2 is only needed for non-commutative scenarios, as encountered in the transport model  \eqref{got}.

\section{MODEL FITTING AND PREDICTION CONSISTENCY}

We discuss here the estimation of the GOT regression model parameters. Since the  true Fr\'{e}chet means $\bm{\mu} = (\mu_1, \dots,  \mu_p)$ and $\nu$ are generally unknown, we replace them in Model \eqref{eq:main} with  consistent estimates $\widehat{\bm{\mu}} = (\widehat{\mu}_1, \dots, \widehat{\mu}_p)$ and $\widehat{\nu}$, where
 \begin{align*}
     \widehat{\mu}_j  = \argmin_{\om \in \mathcal{M}}  \frac{1}{n} \sum_{i=1}^n d^2(\om, X_{i, j}) \text{ and }
     \widehat{\nu}  = \argmin_{\om \in \mathcal{M}} \frac{1}{n} \sum_{i=1}^n d^2(\om, Y_{i}).
 \end{align*}


The following assumption is from \cite{dubey2019frechet} and  ensures that  $d(\widehat{\mu}, \mu) = o_p(1)$, $ d(\widehat{\nu}, \nu) = o_p(1).$ Throughout, we study 
convergence   as the sample size $n \rightarrow \infty.$ 

\begin{assumption} \label{ass:mean} $\mathcal{M}$ is bounded, $\mu, \nu$ are unqiue and there exist $\xi >0 ,C >0$ such that
\begin{align*}
    & \inf_{d(\omega, \mu) < \xi} \left\lbrace E( d^2(X, \omega)) - E(d^2(X, \mu)) - Cd^2(\omega, \mu)  \right\rbrace \geq 0, \\
    & \inf_{d(\omega, \nu) < \xi} \left\lbrace E( d^2(Y, \omega)) - E(d^2(Y, \nu)) - Cd^2(\omega, \nu)  \right\rbrace \geq 0.
\end{align*}
\end{assumption} 
The geodesic optimal  transports are then constructed based on $\widehat{\mu}_j$ and $\widehat{\nu}$. Setting $\widehat{T}_{i, j} = T_{\widehat{\mu}_j , X_{i,j}}$, the true ordering $j_1^{*}, j_2^{*}, \dots, j_p^{*}$ can be estimated sequentially by 

\begin{align}
    \widehat{j}_1 & = \argmin_{ j } \min_{\alpha_1} \frac{1}{n} \sum_{i=1}^n d^2(Y_i, \alpha_1 \odot \widehat{T}_{i, j} (\nu) ),  \nonumber \\
     \widehat{j}_2 & = \argmin_{j \neq \widehat{j}_1 } \min_{\alpha_1, \alpha_2} \frac{1}{n} \sum_{i=1}^n d^2(Y_i, \alpha_1 \odot \widehat{T}_{i, \widehat{j}_1 } \oplus \alpha_2 \odot \widehat{T}_{i, j}  (\nu) ), \nonumber  \\
    &  \vdots  \label{ord} \\
    \widehat{j}_{p-1} & = \argmin_{j \neq \widehat{j}_1, \dots, \widehat{j}_{p-2} } \min_{\alpha_1, \dots, \alpha_{p-1}} \frac{1}{n} \sum_{i=1}^n d^2 \left(Y_i,  \bigoplus_{k=1}^{p-2} \alpha_{k} \odot \widehat{T}_{i, \widehat{j}_{k}} \oplus \alpha_{p-1} \odot \widehat{T}_{i, j } (\nu) \right) \nonumber
\end{align}
and $\widehat{j}_p$ is  determined as the left-over index  $\widehat{j}_p \neq \widehat{j}_1, \dots, \widehat{j}_{p-1}$. The following assumption, characterizing the geometric properties of $\mathcal{M}$, is needed to show the convergence of $\widehat{j}_1, \dots, \widehat{j}_p$.

\begin{assumption}\label{ass:geometry}
\begin{itemize}
\item[(A)] There exists a constant $C_{\gamma}>0$ such that for any $\om_1, \om_2, \om_3 \in \mathcal{M}$ and $\alpha \in [0,1]$,
\begin{align*}
 d(\gamma_{\om_1, \om_3}(\alpha), \gamma_{\om_2, \om_3}(\alpha) ) \leq C_{\gamma} d( \om_1, \om_2 ).
\end{align*}
\item[(B)] There exists a constant $C_{\Upsilon} > 0$ such that
\begin{align*}
d (\Upsilon(\om_1, \om_2, \om_3), \Upsilon(\om_1', \om_2', \om_3')) & \leq C_{\Upsilon} \left\lbrace d(\om_1, \om_1') + d(\om_2, \om_2') + d(\om_3, \om_3') \right\rbrace.
\end{align*}
\end{itemize}
\end{assumption}

It is not difficult to see that Assumption \ref{ass:geometry} is satisfied if $\mathcal{M}$ is a Hilbert space, a CAT(0) space or an orthant of a Hilbert sphere. 
Wasserstein space and the space of symmetric positive definite matrices are  CAT(0) spaces. The temperature data presented in Section \ref{app:temperature} are considered to be distributional data with the Fisher-Rao metric and as such are supported on the positive orthant of a Hilbert sphere 
and Assumption  \ref{ass:geometry} is thus satisfied.

The following result establishes the consistency of the data-based order selection as per \eqref{ord}.

\begin{theorem} \label{thm:order} Under Assumptions \ref{ass:main} -- 
\ref{ass:geometry},  as $n \rightarrow \infty$, 
    \begin{align*}
        P(\widehat{j}_k = j_k^{*}\,\,\,  {\rm  for } \,\, k =1,2, \dots, p) \rightarrow 1.
    \end{align*}
\end{theorem}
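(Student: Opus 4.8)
The plan is to prove order-selection consistency by forward induction on the selection steps $m=1,\dots,p-1$, reducing the problem at each step to a deterministic gap argument. For a fixed step $m$ and a candidate index $j \notin \{j_1^*,\dots,j_{m-1}^*\}$, introduce the population and empirical objectives
\begin{align*}
g_m(j) &:= \min_{\alpha_1,\dots,\alpha_m} E\Big[ d^2\Big(Y, \textstyle\bigoplus_{k=1}^{m-1}\alpha_k \odot T_{j_k^*} \oplus \alpha_m \odot T_j(\nu)\Big)\Big],\\
\widehat g_m(j) &:= \min_{\alpha_1,\dots,\alpha_m} \frac{1}{n}\sum_{i=1}^n d^2\Big(Y_i, \textstyle\bigoplus_{k=1}^{m-1}\alpha_k \odot \widehat T_{i,j_k^*} \oplus \alpha_m \odot \widehat T_{i,j}(\widehat\nu)\Big),
\end{align*}
where the minimization runs over a fixed compact parameter region (boundedness of $\mathcal{M}$ in Assumption~\ref{ass:mean} justifies confining the minimizers to such a region). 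Assumption~\ref{ass:order} states precisely that $g_m$ is uniquely minimized at $j_m^*$, so the gap $\delta_m := \min_{j \neq j_1^*,\dots,j_m^*} g_m(j) - g_m(j_m^*)$ is strictly positive. The whole argument then rests on showing the uniform approximation $\max_j |\widehat g_m(j) - g_m(j)| = o_p(1)$ for each $m$ over the finite candidate set.

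First I would establish this uniform approximation, the technical core. The discrepancy has two sources. The first is that $\widehat T_{i,j} = T_{\widehat\mu_j, X_{i,j}}$ and $\widehat\nu$ are built from estimated Fr\'echet means. Writing $[\alpha \odot T_{\om_1,\om_2}](\om) = \Upsilon(\om_1, \gamma_{\om_1,\om_2}(\alpha), \om)$, I would bound the effect of replacing $(\widehat\mu_{j_k},\widehat\nu)$ by $(\mu_{j_k},\nu)$ by combining Assumption~\ref{ass:geometry}(A), which gives $d(\gamma_{\widehat\mu_{j},X}(\alpha), \gamma_{\mu_{j},X}(\alpha)) \le C_\gamma\, d(\widehat\mu_j,\mu_j)$, with Assumption~\ref{ass:geometry}(B), which makes $\Upsilon$ Lipschitz in all three arguments. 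Since $\bigoplus_k \alpha_k \odot \cdot$ is a composition of such $\Upsilon$-maps, a telescoping argument propagates the perturbation and yields a bound $C(\sum_{k} d(\widehat\mu_{j_k},\mu_{j_k}) + d(\widehat\nu,\nu))$ with $C$ depending only on $C_\gamma, C_\Upsilon, p$ and the parameter region; by Assumption~\ref{ass:mean} \citep{dubey2019frechet} this is $o_p(1)$, uniformly in the $\alpha$'s and in $j$. The second source is sampling error with the true means in place: $d^2$ is bounded and $\alpha \mapsto \bigoplus_k \alpha_k \odot T_{j_k}(\nu)$ is continuous on a compact set, so a uniform law of large numbers gives $\sup_\alpha |\frac1n\sum_i d^2(\cdot) - E\, d^2(\cdot)| = o_p(1)$. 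Since $|\min_\alpha A(\alpha) - \min_\alpha B(\alpha)| \le \sup_\alpha|A(\alpha)-B(\alpha)|$, combining the two pieces yields the claimed approximation of the minima; note only convergence of minimal values, not of minimizers, is needed.

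With uniform approximation in hand, the induction closes cleanly. Let $A_m = \{\widehat j_m = j_m^*\}$ and $B_m = A_1 \cap \dots \cap A_m$. On $B_{m-1}$ the empirical step-$m$ criterion in \eqref{ord} coincides with $\widehat g_m(\cdot)$ above, so whenever $\max_j|\widehat g_m(j) - g_m(j)| < \delta_m/2$ we have $\widehat g_m(j_m^*) < \widehat g_m(j)$ for every competing $j$, forcing $\widehat j_m = j_m^*$. Hence $P(A_m^c \cap B_{m-1}) \le P(\max_j|\widehat g_m(j)-g_m(j)| \ge \delta_m/2) \to 0$. A union bound $P(B_{p-1}^c) \le \sum_{m=1}^{p-1} P(A_m^c \cap B_{m-1})$ then gives $P(B_{p-1}) \to 1$, and $\widehat j_p$ is the leftover index, completing the proof.

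I expect the main obstacle to be the first part of the uniform-approximation step: carefully propagating the Fr\'echet-mean estimation error through the $m$-fold composition of scalar-multiplied transports. The scalar multiplication \eqref{mult} for $|\alpha|>1$ is itself defined through repeated compositions of $\Upsilon$, so the Lipschitz constant in the telescoping bound grows with the number of compositions; keeping this controlled is exactly why the minimization must be restricted to a compact parameter region and why boundedness of $\mathcal{M}$ is invoked. Verifying that this restriction is harmless, i.e.\ that the unconstrained minimizers cannot escape in a way that alters the selected order, is the delicate point; everything else is a combination of the stated Lipschitz assumptions, consistency of Fr\'echet means, and a standard uniform law of large numbers.
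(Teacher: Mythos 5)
Your proposal is correct and follows essentially the same route as the paper: the two-source decomposition of the error (Fr\'echet-mean estimation handled via Assumption \ref{ass:geometry}(A)--(B) and a telescoping bound through the composed transports, plus a uniform law of large numbers over the compact parameter set) corresponds exactly to the paper's two auxiliary lemmas, and the induction over selection steps with the strict gap from Assumption \ref{ass:order} is the paper's closing argument. Your explicit attention to the growth of Lipschitz constants under the $|\alpha|>1$ scalar multiplication and to confining $\bm{\alpha}$ to a compact region is a point the paper treats only implicitly, but it does not change the argument.
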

After having demonstrated consistency   of the estimated order of the predictors, our next goal is to establish prediction consistency. Specifically, we aim to find a set of parameters $\bm{\alpha}$ in a compact set $\mathbb{K}$ that is as close as possible to the set of prediction-minimizing parameters $\Gamma \subseteq \mathbb{K}$. Here for a parameter vector $\gamma$ the fact that $\gamma \in  \Gamma$ is equivalent to 
\begin{align*}
\gamma= \argmin_{\bm{\beta} \in \mathbb{K}} 
E\left[d^2 \left(Y, \bigoplus_{k=1}^{p} \beta_{k} \odot T_{ j_{k}^{*}} (\nu) \right) \right],
\end{align*}
i.e., $\Gamma$ is the set of parameter vectors  that minimize  prediction error.
Substituting empirical distributions leads to the following estimators,
\begin{align}
\widehat{\bm{\alpha}} = (\widehat{\alpha}_1, \widehat{\alpha}_2, \dots, \widehat{\alpha}_p)& = \argmin_{\bm{\beta} \in \mathbb{K} } \frac{1}{n} \sum_{i=1}^n d^2 \left(Y_i,  \bigoplus_{k=1}^{p} \beta_{k} \odot \widehat{T}_{i, \widehat{j}_{k}} (\nu) \right), \label{empest}
 \end{align}
where we select an arbitrary representative vector 
among the minimizers of the r.h.s. of
\eqref{empest} as $\widehat{\bm{\alpha}}$  if there is more than one minimizer.

We  quantify the discrepancy between 
$\widehat{\bm{\alpha}}$ and  the set of prediction-minimizing parameters $\Gamma$ by comparing squared prediction errors,
\begin{align*}
  \Delta (\widehat{\bm{\alpha}}, \Gamma ) = E\left[d^2 \left(Y, \bigoplus_{k=1}^{p} \widehat{\alpha}_{k} \odot T_{ j_{k}^{*}} (\nu) \right) \right] - \min_{\bm{\alpha} \in \mathbb{K} } E\left[d^2 \left(Y, \bigoplus_{k=1}^{p} \alpha_{k} \odot T_{ j_{k}^{*}} (\nu) \right) \right],
\end{align*}
where $Y = \varepsilon \oplus \bigoplus_{k=1}^{p} \alpha_{k}^{*} \odot T_{ j_{k}^{*}} (\nu)$ for some $\bm{\alpha}^{*} = (\alpha_1^{*}, \dots, \alpha_p^{*}) \in \Gamma$. 
Based on the definition of $\varepsilon$ in Equation \eqref{eq:epsilon}, it can be seen that $\Delta(\widehat{\bm{\alpha}}, \Gamma) \geq 0$ and $ \Delta(\widehat{\bm{\alpha}}, \Gamma) = 0 $ if and only if  $\widehat{\bm{\alpha}} \in \Gamma$. We obtain the following result on prediction consistency.

\begin{theorem} \label{thm:consist}
Under Assumptions \ref{ass:main} --  \ref{ass:geometry},  as  $n \rightarrow \infty,$
$$ 
\Delta (\widehat{\bm{\alpha}}, \Gamma) \rightarrow^p 0. 
$$
\end{theorem}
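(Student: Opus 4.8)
The plan is to recognize this as a consistency-of-$M$-estimators statement for the transport algebra and to combine a uniform law of large numbers with the order-selection consistency of Theorem \ref{thm:order}. I would write the population risk as $M(\bm{\beta}) = E\big[d^2\big(Y, \bigoplus_{k=1}^{p}\beta_k \odot T_{j_k^*}(\nu)\big)\big]$ and the fitted empirical risk as $\widehat{M}_n(\bm{\beta}) = \frac{1}{n}\sum_{i=1}^n d^2\big(Y_i, \bigoplus_{k=1}^p \beta_k \odot \widehat{T}_{i,\widehat{j}_k}(\nu)\big)$, so that $\widehat{\bm{\alpha}}$ minimizes $\widehat{M}_n$ over the compact set $\mathbb{K}$ and $\Delta(\widehat{\bm{\alpha}}, \Gamma) = M(\widehat{\bm{\alpha}}) - \min_{\bm{\beta}\in\mathbb{K}} M(\bm{\beta})$. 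The goal then reduces to showing $\sup_{\bm{\beta}\in\mathbb{K}}|\widehat{M}_n(\bm{\beta}) - M(\bm{\beta})| \rightarrow^p 0$, after which the conclusion follows from the standard argmin argument.

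For the uniform bound I would interpose the oracle empirical risk $M_n(\bm{\beta}) = \frac{1}{n}\sum_{i=1}^n d^2\big(Y_i, \bigoplus_{k=1}^p \beta_k \odot T_{i,j_k^*}(\nu)\big)$, which uses the true ordering $j_k^*$ and the population transports $T_{i,j} = T_{\mu_j, X_{i,j}}$, and split
\[
\sup_{\bm{\beta}}|\widehat{M}_n(\bm{\beta}) - M(\bm{\beta})| \le \sup_{\bm{\beta}}|\widehat{M}_n(\bm{\beta}) - M_n(\bm{\beta})| + \sup_{\bm{\beta}}|M_n(\bm{\beta}) - M(\bm{\beta})|.
\]
The second term I would dispatch with a uniform law of large numbers: since $\mathcal{M}$ is bounded (Assumption \ref{ass:mean}) the integrand is dominated by $\mathrm{diam}(\mathcal{M})^2$, and under Assumption \ref{ass:geometry} the map $\bm{\beta}\mapsto \bigoplus_{k}\beta_k\odot T_{i,j_k^*}(\nu)$ is continuous on the compact $\mathbb{K}$, so the indexed class of integrands is Glivenko--Cantelli and the term is $o_p(1)$. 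The first term is where the nuisance estimation enters: on the event $\{\widehat{j}_k = j_k^*,\, k=1,\dots,p\}$, which has probability tending to $1$ by Theorem \ref{thm:order}, $\widehat{M}_n$ and $M_n$ differ only through the replacement of $\mu_j$ by $\widehat{\mu}_j$ (and $\nu$ by $\widehat{\nu}$) inside the transports. I would bound the pointwise difference by $|d^2(Y_i, A_i) - d^2(Y_i, B_i)| \le 2\,\mathrm{diam}(\mathcal{M})\, d(A_i, B_i)$, where $A_i, B_i$ denote the two predicted objects, and then control $d(A_i, B_i)$ by propagating the perturbations $d(\widehat{\mu}_j, \mu_j)=o_p(1)$ and $d(\widehat{\nu},\nu)=o_p(1)$ (Assumption \ref{ass:mean}) through the algebra.

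Propagating this perturbation uniformly over $\mathbb{K}$ is the hard part. Each scalar multiple $\beta_k \odot T$ is a composition of the ubiquity map $\Upsilon$ with geodesic evaluations $\gamma_{\cdot,\cdot}(\beta_k)$ (for $|\beta_k|>1$, a fixed finite number of such factors, bounded since $\mathbb{K}$ is compact), and the addition $\oplus$ is functional composition; by Assumption \ref{ass:geometry}(A)--(B) each building block is Lipschitz in its base points with constant independent of $\bm{\beta}$, and a finite composition of Lipschitz maps remains Lipschitz. I would therefore produce a constant $C$ depending only on $p$, $\mathrm{diam}(\mathcal{M})$, $C_\gamma$, $C_\Upsilon$ and $\sup_{\mathbb{K}}|\beta_k|$ with $\sup_{\bm{\beta}} d(A_i, B_i) \le C\sum_{j=1}^p d(\widehat{\mu}_j, \mu_j) + C\, d(\widehat{\nu}, \nu)$ uniformly in $i$; averaging over $i$ and invoking Assumption \ref{ass:mean} yields $\sup_{\bm{\beta}}|\widehat{M}_n(\bm{\beta}) - M_n(\bm{\beta})| = o_p(1)$ on the high-probability event, closing the uniform convergence.

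Finally, for the argmin step I would fix any $\bm{\alpha}^*\in\Gamma$ (nonempty by compactness of $\mathbb{K}$ and continuity of $M$). Since $\widehat{\bm{\alpha}}$ minimizes $\widehat{M}_n$,
\begin{align*}
0 \le \Delta(\widehat{\bm{\alpha}},\Gamma) &= M(\widehat{\bm{\alpha}}) - M(\bm{\alpha}^*) \\
&\le \big(M(\widehat{\bm{\alpha}}) - \widehat{M}_n(\widehat{\bm{\alpha}})\big) + \big(\widehat{M}_n(\bm{\alpha}^*) - M(\bm{\alpha}^*)\big) \\
&\le 2\sup_{\bm{\beta}\in\mathbb{K}}|\widehat{M}_n(\bm{\beta}) - M(\bm{\beta})|,
\end{align*}
where the leftmost inequality is the already-established nonnegativity of $\Delta$. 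The right-hand side is $o_p(1)$, giving $\Delta(\widehat{\bm{\alpha}},\Gamma)\rightarrow^p 0$. I expect the main obstacle to be the third paragraph, namely verifying that the transport-algebra operations compose into a globally Lipschitz map of the base points with constants uniform over $\mathbb{K}$; in particular one must handle the floor-based definition \eqref{mult} of $\beta_k\odot T$ for $|\beta_k|>1$ (checking continuity across integer values of $|\beta_k|$, where the number of compositions jumps) and ensure the Lipschitz constant does not blow up under the $p$-fold composition.
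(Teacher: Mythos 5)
Your proposal follows essentially the same route as the paper's proof: condition on the event $\{\widehat{j}_k = j_k^{*}\}$ via Theorem \ref{thm:order}, interpose the oracle empirical risk built from the true Fr\'echet means and true ordering, control the plug-in discrepancy by propagating $d(\widehat{\mu}_j,\mu_j)=o_p(1)$ through the transport algebra using Assumption \ref{ass:geometry}, establish a uniform law of large numbers via a Glivenko--Cantelli argument (the paper makes this explicit through a Lipschitz-in-$\bm{\alpha}$ bound and bracketing numbers), and finish with the standard argmin inequality. The decomposition you use is an equivalent rearrangement of the paper's telescoping bound, and your closing caveat about the floor-based scalar multiplication is a legitimate point of care that the paper's argument also implicitly relies on but does not belabor.
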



\section{NUMERICAL STUDIES}\label{sec:app}

\subsection{General Considerations}

We apply geodesic optimal transport regression for two real datasets. For comparison purposes we deploy  a  Nadaraya-Watson type kernel estimator that is applicable for data in the geodesic metric space 
$ (\mathcal{M}, d)$. Related kernel estimators for object to object regression when the random objects are located on Riemannian manifolds have been considered previously  \cp{stei:09,stei:10}.  The  generalized Nadaraya-Watson estimate of responses $Y$ given a single  predictor $X$ and a sample $(X_i,Y_i)$ is 
\begin{align*}
    \widehat{Y} = \argmin_{\omega \in \mathcal{M}} \frac{\sum_{i=1}^n K(X_i, X) d^2(Y_i, \omega) }{ \sum_{i=1}^n K(X_i, X) },
\end{align*}
where $K$ is the weighting function 
$
K(\omega_1, \omega_2) = e^{- \frac{d^2(\omega_1, \omega_2)}{\tau}}
$ and $\tau$ a scaling parameter. In our empirical studies,  $\tau$ is selected by employing  a  ``median heuristic", where $\tau$ is selected as 
 $\tau = \text{median} \{ d(X_i, X) : i=1,2, \dots, n \}$. 
 
 We note that this generalized Nadaraya-Watson  regression estimator accommodates only  one predictor, whereas a key feature of GOT regression  is that  it is designed for multiple predictors. To the best of our knowledge, GOT regression is currently the only model available in the literature that operates in a general geodesic metric space and accommodates multiple predictors.

\subsection{Mortality data}

There has been long-standing interest in  studying all aspects of human mortality in the areas of demography and  the science of aging and longevity \citep{vaup:10, chiou2009modeling}. Details about the study of human longevity  can be found at \url{https://www.mortality.org/}, where   age-at-death distributions of 
males and females for 34 countries have been made available in the form of lifetables, from which one can obtain estimated continuous distributions and densities of 
age-at-at death. We consider  the age-at-death distributions separately for  males and females observed in the year 2000 as the two predictors to predict 
the age-at-death distributions for females in the year 2010 with the proposed GOT regression for two predictors.  
For these one-dimensional random distributions we use the Wasserstein metric
and place them into the Wasserstein space.  Since the comparison method 
obtained with the generalized Nadaraya-Watson esimator can only accommodate one predictor, we apply this method to predict the distribution of 
age-at death for females in 2010 using exclusively the distribution of females in 2000. 

We implemented a leave-one-out scheme to compare the two regression models. We use the distributions of age-at-death  observed  in 2000 for all  
countries excluding the $i$th country as training data to train the proposed GOT regression  and the generalized Nadaraya-Watson regression and then apply the fitted models to 
predict the $i$th country's age-at-death distribution $\mu_i$ for females in 2010. The leave-one-out error, i.e.,  the averaged Wasserstein distances $ \sum_{i=1}^{34} d_{\mathcal{W}} ( \mu_i, \widehat{\mu}_i )/34 $, for the GOT model  and the generalized Nadaraya-Watson approach were found to be 0.58 and 1.37, respectively. 

To demonstrate the comparison between GOT regression and generalized Nadaraya-Watson regression,  
we show the leave-one-out predictions for Ukraine and Latvia  in Figure \ref{fig:mort}, where  the age-at-death distribution of females is the primary predictor and that 
of males is the secondary predictor. The  fitted parameters for the GOT model are  $(\widehat{\alpha}_1, \widehat{\alpha}_2) =(0.846, 0.103)$ and $(\widehat{\alpha}_1, \widehat{\alpha}_2) =(0.851, 0.103)$ for Ukraine and Latvia respectively, 
where the first parameter $\widehat{\alpha}_1$ is the multiplier to the transport of the barycenter of the female age-at-death distributions of all countries to the female age-at-death 
distribution of the specific country in the year 2000 and the second parameter $\widehat{\alpha}_2$ is the analogously defined multiplier for males. The fact that both multipliers are smaller than one indicates the best prediction 
in the framework of this model involves a regression to the mean, as the fitted parameters and thus factors are smaller than 1, especially the factor  for males. This means the best predictions for the transports from the mean and thus the prediction of the female age-at-death distribution in 2010 is an attenuated version of the transports observed for the year 2000. For the example of Ukraine this means the prediction entails a reduction in the  left shift (increased mortality as the age-at-death distribution is to the left of the barycenter) from the barycenter distribution (Fr\'echet mean of all countries) in 2010 than what is observed for 2000, and this attenuation is very strong for the distributions corresponding to males so that the effect of the male age-at-death distribution for males is found to be small.

\begin{figure}[!t]
	\centering
	\begin{tikzpicture}[scale = 0.6]
	\node (p1) at (0,0) {\includegraphics[width=.48\textwidth, height=.48\textwidth]{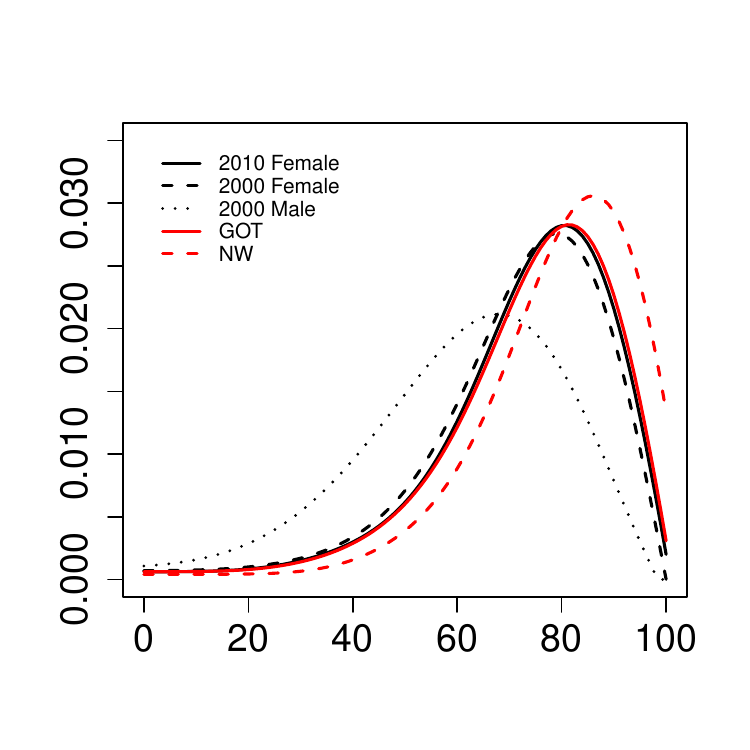}};
	\node (t1) at (0,5.5) {Ukraine};
	\node[anchor=south west] (p2) at (p1.south east) {\includegraphics[width=.48\textwidth, height=.48\textwidth]{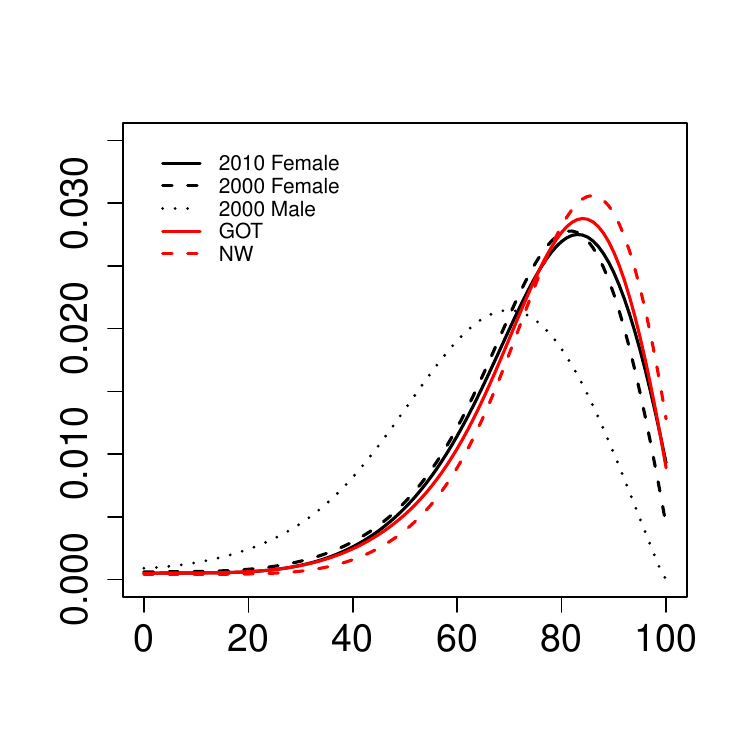}};
	\node (t2) at (13.5,5.5) {Latvia};
	\end{tikzpicture}
	\caption{GOT regression for age-at-death distributions. The solid black curve is the observed density of the age-at-death distribution of females in 2010 for Ukraine (left) and Latvia (right),  which we aim to predict from age-at-death distributions of females (dashed black) and males (dotted black) observed in 2000. We demonstrate both the density of age-at-death as predicted by GOT regression (solid red) and the density as predicted by the generalized Nadaraya-Watson method (red dashed). 
 }
	\label{fig:mort}
\end{figure}

\subsection{Temperature data} \label{app:temperature}

As a second application, we demonstrate  GOT regression for  temperature data recorded at the weather stations of airports in the U.S.. Airport stations usually have the most accurate and reliable temperature recordings. The daily minimum and maximum temperatures  can be downloaded from \url{https://www.ncdc.noaa.gov/cdo-web/search?datasetid=GHCND}. To demonstrate GOT regression on the Hilbert sphere, we consider two-dimensional density functions of daily maximum and minimum temperatures separately for each year's summer months (June - September) and winter months (December - March) for  weather stations located at 50 major airports across the U.S.. 

Here we use use the temperature profiles obtained in 2001 and 2006 to predict temperatures in 2011. More specifically, the bivariate distributions of daily minimum and maximum temperatures in the summer months of 2001 and 2006 are used as the two predictors to predict the distribution of temperatures in the summer months in 2011. For the generalized Nadaraya-Watson  estimator  the distribution  of the temperatures in the summer months of 2006 
is used to predict the bivariate distribution of temperatures in the  months of 2011. 
For the space of bivariate distributions, in which  the two predictors and the response reside,  we adopt the Fisher-Rao metric \eqref{fr} and 
thus the geodesic optimal transports of these bivariate distributions proceed along the geodesics of the Hilbert 
sphere. 

The leave-one-out prediction errors of the proposed GOT regression and of the generalized Nadaraya-Watson approach  with respect to the Fisher-Rao (intrinsic) distance  were found to be 0.19 and 0.42 respectively, demonstrating the superiority of the GOT regression for this application.  As an illustrative example, the leave-one-out predictions for the  Sky Harbor International Airport in Phoenix, AZ,  are shown in Figure \ref{fig:phe}, where also the parameters of the fitted model are reported. This fit shows that the 2006 temperature distribution is the main predictor for the 2011 temperature distribution.  
\begin{figure}[!t] \vspace{-1.5cm}
    \centering
    \begin{tikzpicture}
\matrix (m) [row sep = 0em, column sep = 0em]{    
	 \node (p11) {\includegraphics[scale=0.4]{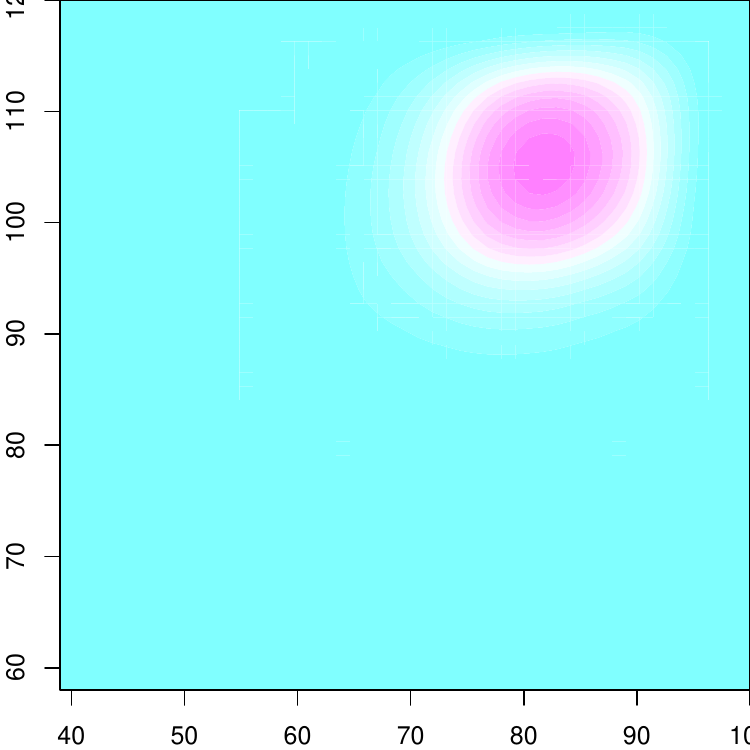}}; 
	 \node[above left = -0.7cm and -1.7cm of p11] (t12) {2001}; &
	 \node (p12) {\includegraphics[scale=0.4]{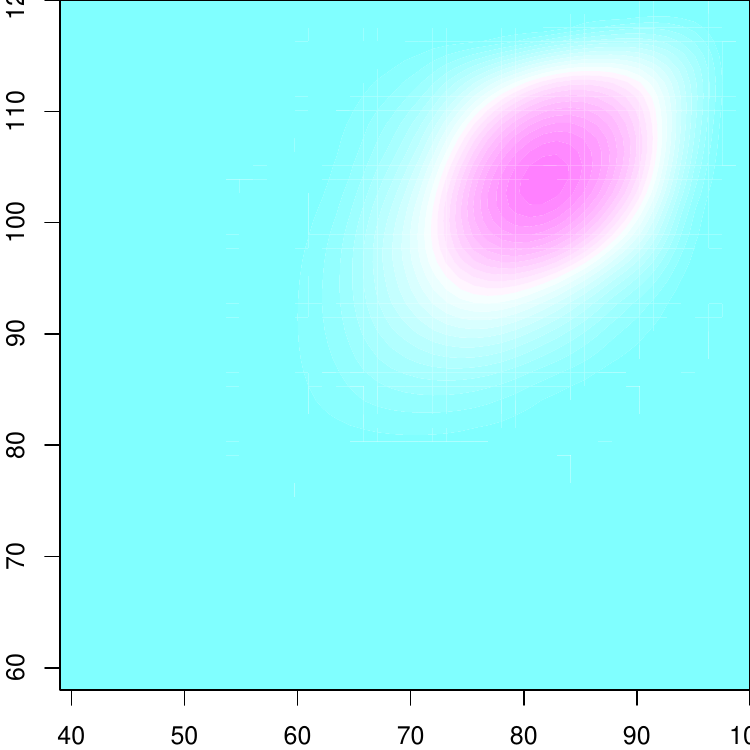}};
	 \node[above left = -0.7cm and -1.7cm of p12] (t21) {2006}; & 
  \node (p21) {\includegraphics[scale=0.4]{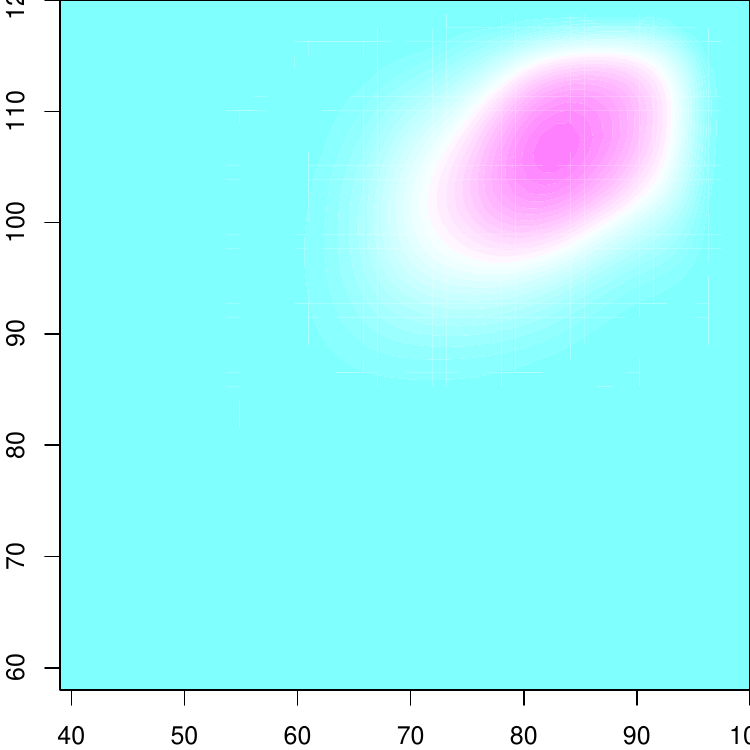}};
	 \node[above left = -0.7cm and -1.7cm of p21] (t21) {2011}; 
	 \\ 
	 \node (p31) {\includegraphics[scale=0.4]{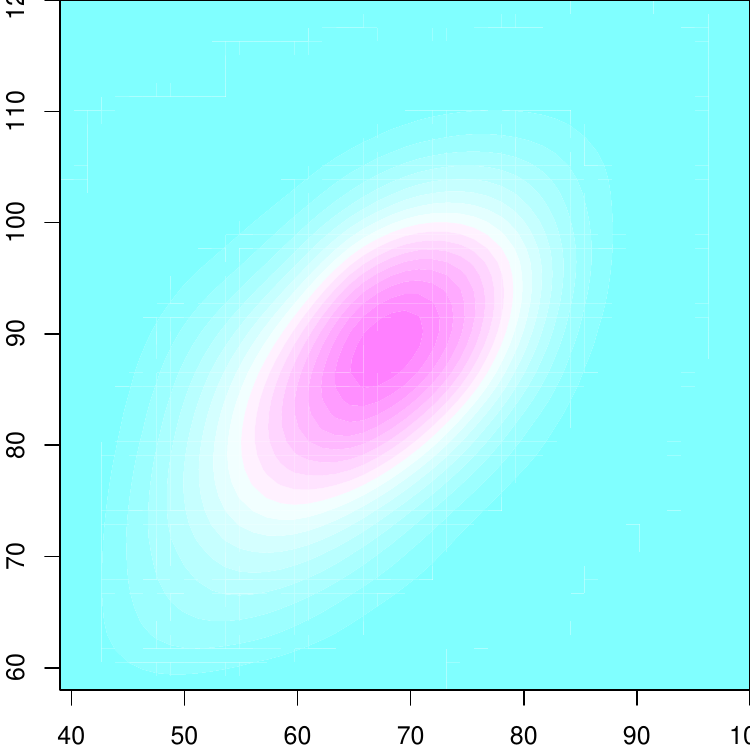}};
	 \node[above left = -0.7cm and -5.1cm of p31] (t31) {2011 predicted by GNW};
	 & \node (p32) {\includegraphics[scale=0.4]{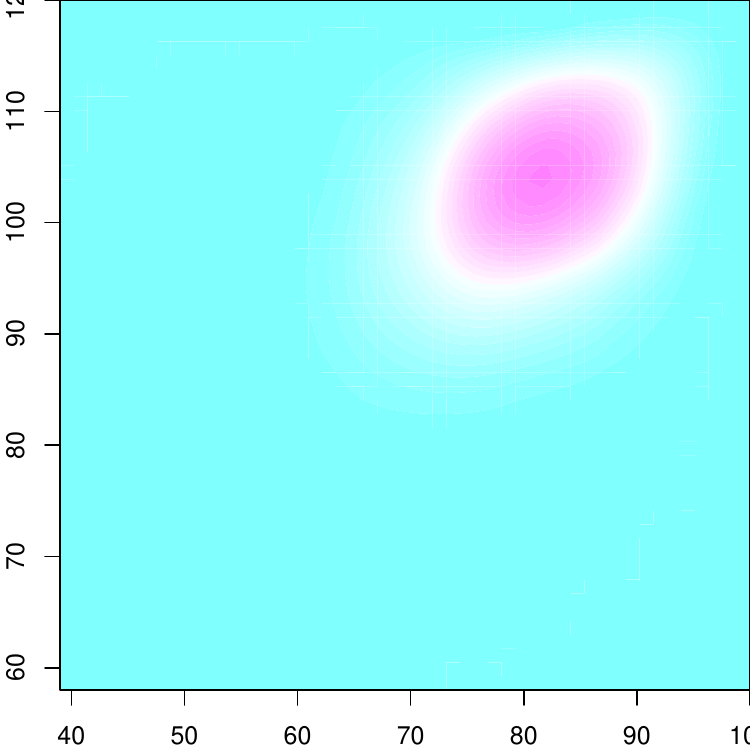}};
	 \node[above left = -0.7cm and -5.1cm of p32] (t31) {2011 predicted by GOT}; & 
	 \node (p22) at (1.2,0) {\includegraphics[scale=0.13]{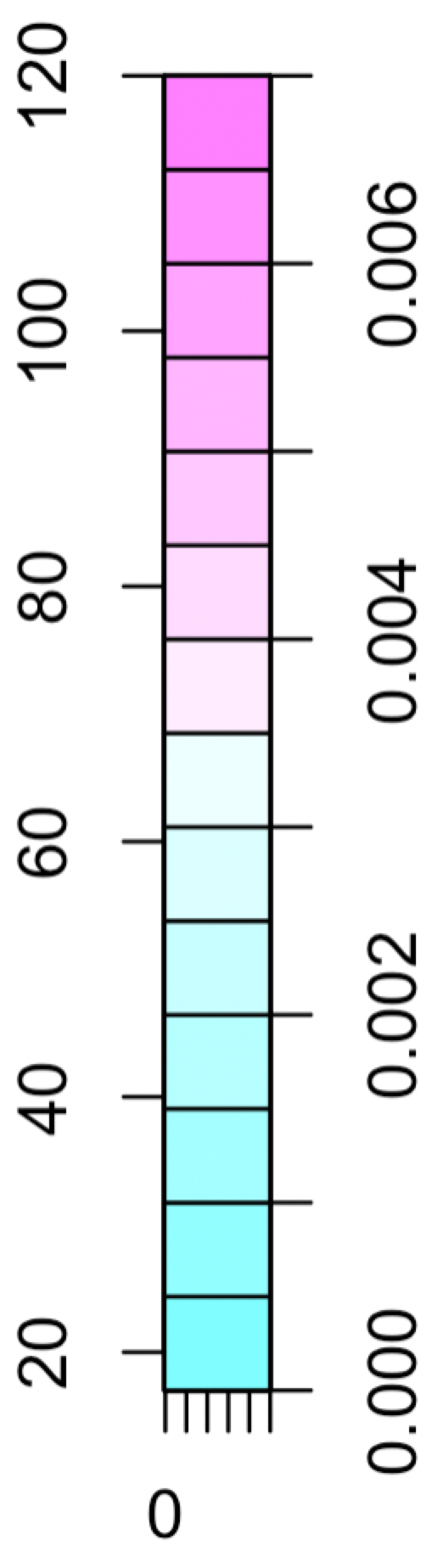}}; 
	 \node[above left = -2cm and 0.5cm of p22] (t22) {Legend}; 
	 \\
};
\node[above= -0.0cm and 0cm  of m] (title) {Phoenix Airport} ;
\end{tikzpicture}
\vspace{-.5cm}
    \caption{Contour plots of observed and predicted densities of  the joint bivariate distribution of minimum and maximum daily temperatures for the airport at Phoenix, AZ.  The first row illustrates the bivariate densities of the temperatures for summer months as  observed in 2001 (top left), 2006 (top middle) and 2011 (top right). The two two-dimensional distributions observed in 2001 and 2006 are used for GOT as the two predictors for predicting  the bivariate distribution in 2011. The predicted bivariate distributions for the daily minimum/maximum temperatures in 2011 obtained  by the GOT regression and the generalized Nadaraya-Watson kernel approach are found to have Fisher-Rao distances to the observed bivariate distribution of 0.19 and 1.1, respectively. The bivariate densities of the predicted distributions are  shown in the bottom row.
    The fitted model parameters for the GOT regression are $\widehat{\alpha}^1 = 0.97$ for the predictor corresponding to the 2006 joint temperature distribution and $\widehat{\alpha}^2 = 0.05$ for the predictor corresponding to the 2001 joint temperature distribution.}  
    \label{fig:phe}
\end{figure}

\section{DISCUSSION}\label{sec:conl}

We introduce a new interpretation of optimal transports as moving objects along geodesics in a 
metric space that features  unique geodesics and ubiquity of geodesics and refer to such transports as geodesic optimal transports. The ubiquity assumption is satisfied for 
most unique geodesic spaces that are  of interest for statistical analysis, sometimes  by adding a projection step,  and thus is only a mild limitation.

 Making use of this new concept of optimal transports, we introduce a geodesic regression model where we make use of a previously introduced transport algebra. It is the first model for random objects, i.e.,  data in metric spaces, that covers vectors of objects predictors and it allows for interpretation of the effects of individual predictors. While we provide consistency results under additional regularity assumptions,  rates of convergence will require both further assumptions and new theoretical 
 concepts and to obtain such rates  will be left for future research.
 
 The proposed GOT regression is seen to work well 
 for multiple random object predictors paired with a random object response when the objects are situated in geodesic spaces that satisfy the requirements. The GOT 
 model proves useful for many scenarios with complex data 
 types as encountered increasingly in contemporary data analysis. We expect that ransport based regression approaches will stimulate further research
 on modeling regression relations in the emerging field of  random objects and metric statistics.

{
\bibliographystyle{agsm}
\bibliography{Functional}
}

\end{document}